  \newcommand{\diag}{\mathop{\rm diag}}
\newtheorem{cond}{\textnormal{\textbf{Condition}}}
\newtheorem{problem}{\textnormal{\textbf{Problem}}}
\newcommand{\matr}[1]{\mathbf{#1}}
\newtheorem{theorem}{\textnormal{\textbf{Theorem}}}
\newtheorem{remark}{\textnormal{\textbf{Remark}}}
\newtheorem{proposition}{\textnormal{\textbf{Proposition}}}
\newcommand{\nunder}[2][5]{\mathrlap{\mkern\the\numexpr#1/2mu\relax\underlipe{\phantom{\mathrm{#2}\mkern-#1mu}}}#2}
\title{\LARGE \bf
Consensus-Based Torque Control of Deloaded Wind DFIGs for Distributed and Fair Dynamic Dispatching
}
\author{Stefanos~Baros$^{1}$ 
\thanks{$^{1}$Stefanos Baros is with the ECE department
of Carnegie Mellon University, Pittsburgh, PA, 15213 USA e-mail: {\tt sbaros@andrew.cmu.edu}}}
\begin{document}

\maketitle
\thispagestyle{empty}
\pagestyle{empty}

\begin{abstract}
In this paper we aim to address the problem of dynamically dispatching  a group of state-of-the-art deloaded wind generators (WGs) in a fair-sharing manner. We use the term dynamically since the WGs aim to dispatch themselves according to a varying committed WF power output. We first propose a leader-follower  protocol whose execution guarantees asymptotically, two control objectives. These are 1) reaching asymptotic consensus on the utilization level of all WGs and 2) the total power output of the WGs asymptotically converges to the reference value. Thereafter, we combine singular perturbation and Lyapunov theory to prove that, under certain conditions, the proposed protocol will asymptotically converge to its equilibrium. Finally, we derive a cooperative Control Lyapunov Function-based (CLF) controller for the rotor side converter (RSC) of each WG that realizes the protocol in practice. We demonstrate the effectiveness of our proposed protocol and the corresponding RSC controller design via simulations on the modified IEEE 24-bus RT system.

\end{abstract}


\IEEEpeerreviewmaketitle

\section{Introduction}
A recent study conducted by the  US Department Of Energy \cite{windvision} outlines the future of wind power in the US.  Specifically, it mentions that   10\% of the US electricity demand is expected to be produced by wind power by 2020, 20\% of the US electricity demand  by 2020 and  35\% of the US electricity demand  by 2050. In a similar status is Europe, where wind power integration is expected to increase significantly in the next years \cite{ackerman}. These studies evidence a recent world-wide tendency toward integrating a lot of wind power into power systems. On the other hand, integrating high levels of wind power into power systems raises an important challenge for those systems. That is, to maintain their stability, their reliability and their robustness \cite{ackerman}.
\par It is worthwhile realizing that in high-wind-integration settings, the WGs control will have a pronounced impact on the stability and performance of the power grids that accommodate the wind power.  For this reason, the ongoing regulations for the operation of WGs require the WGs to provide multiple advanced capabilities into the grid \cite{ackerman}. Between them,  frequency regulation, inertial response, power output smoothing, Low Voltage Ride-Through (LVRT) capability and voltage control \cite{ackerman}. Yet, by allowing communication between WGs,  capabilities that require coordination between WGs can be achieved.  Examples can be coordinated voltage control for regulation of the WF terminal voltage and coordinated power control with load-sharing between WGs. Such capabilities can be achieved efficiently with distributed control  methods that require  a limited number of dedicated communication links. 
\par In this paper, we focus on an important and  advanced capability that can be provided by a group of deloaded WGs. The capability for regulating their total power extracted from the wind such that it tracks a reference while they dynamically dispatch themselves in a fairly fashion. Deloaded WGs are characterized by their flexibility to increase their power output when they are commanded to. A WF that has its WGs operating in a deloaded regime aiming to provide the above capability has to address the following challenges. Firstly, to timely compute the reference power points of its WGs while taking into account  the local varying wind speed conditions. Secondly, to timely communicate these set-points to the WGs in order for the WF total power output to match the required reference. 

In the literature, most of the studied methods  relied  on centralized control schemes. Centralized schemes presume that information about the  wind speed, which is scattered throughout the WF, is obtained from the central controller. The central controller, having this information together with information regarding the total WF power output can then compute the power set-point for each WG individually. Then, it can dispatch the WGs accordingly. Finally, the power output of each WG can be regulated by its local controller such that the WG generates the reference power. An approach belonging to the above category can be found in \cite{constantpower}. Despite the fact that centralized-based approaches can address the problem discussed above they come with several drawbacks, rather critical to be neglected. Among others are, single-point failures, increased computational cost, extensive communication network costs and delays \cite{selforga}. The delays can hamper a fast-responding control action from the WF and can also compromise its tracking performance when (in a given set-up) the dispatching of the WGs has to happen fast e.g for maintaining power balance in a microgrid.
\par The literature on distributed methods for addressing the problem of dispatching distributively a group of WGs given a varying WF committed power output is not very broad. As far as the authors are aware, the only references related to the above problem are \cite{fullydistribdfig} and \cite{biegel}. In \cite{biegel}, the authors proposed  a distributed WF controller for regulating the power references of multiple WGs. At the same time, the controller ensured that the fatigue experienced by the WGs was reduced while the total power of the WF was reaching a pre-assigned value. In a similar line of research, the authors in \cite{fullydistribdfig} proposed a multi-agent-based strategy for addressing the same problem in a microgrid setting. 
The global information of the total demand and total available wind power were retrieved via a consensus protocol that was executed by the agents. Subsequently, this information was used by each agent to define the set point of its corresponding WG. 
\par In this work we make several contributions toward addressing the problem of distributively and dynamically dispatching a group of WGs for the purpose of having the WF power output tracking a reference. To this end, we first propose a \textit{distributed leader-follower consensus protocol} that realizes two basic control objectives 1) asymptotic consensus of the utilization levels of all WGs i.e consensus on the ratio defined by the available (from the wind) mechanical power over the maximum mechanical power of each WG 2)  asymptotic tracking of a varying reference by the         WF total power output. We prove that the proposed protocol asymptotically converges to its equilibrium point under specific conditions. Our proof relies on results from singular  perturbation theory \cite{khalil}. In the last part of our approach, we develop a Control Lyapunov Function-based (CLF) \cite{sontag}  RSC controller that realizes the proposed protocol in practice.
\par The rest of the paper is outlined in the following way. Section II describes the problem of distributed dynamic dispatching of the WGs for WF power output tracking. Section III, presents the relevant WF model. Section IV and V provide the main results of the paper. Section IV, introduces the proposed protocol and Section V presents the stability analysis.  Section VI gives the derivation of  the CLF-based torque RSC controller. In Section VII, the effectiveness of the proposed approach is evaluated via  simulations on the modified IEEE-RTS 24-bus system. Finally, Section VIII concludes the paper.

\section{Problem Formulation}
\subsection{Notation}

With $\mathcal{G}$ being a set we use $\left\vert\mathcal{G}\right\vert$ to denote its cardinality. We denote by $\mathbb{R}$  the set of reals and by $\mathbb{C}$ the set of complex numbers. Also, we denote by $\mathbb{R}_{+}$ the set of non-negative real numbers and with $\mathbb{R}_{++}$ the set of positive reals. We denote the $m$-dimensional Euclidean space  by $\mathbb{R}^m$. We denote vectors and matrices with bold characters. Let $\matr{A}\in\mathbb{R}^{m\times n}$ be a $m\times n$ matrix of reals. With $\matr{A}^\top$ we denote the transpose of $\matr{A}$ and with $[a]_{ij}$ the $(i,j)$-entry of the matrix $\matr{A}$. Let $\matr{A}\succ0\; (\matr{A}\succeq 0)$ denote that the matrix $\matr{A}$  is positive definite (semi-definite). The spectrum  of the matrix $\matr{A}$ (set of eigenvalues) is denoted by $\sigma(\matr{A})$. A $n\times n$ diagonal matrix $\matr{B}$ is denoted by $\matr{B}=\diag[b_i]_{i=1}^n$.  The maximum value of the vector $\matr{a}$ is denoted by $\bar{\matr{a}}$. 
  Similarly the maximum value of a scalar quantity $z$ is given by $\bar{z}$.  With $\matr{I}_n$ we denote the $n\times n$ identity matrix and with $\matr{0}_{n\times 1}$ and $\matr{1}_{n\times 1}$ a $n\times 1$ column vector of zeros and ones respectively.
 With $\dot{x}$ we denote the time derivative of a variable $x$ with respect to t, $\frac{dx}{dt}$.  With $\ddot{x}$ we denote the second derivative $\frac{d^2 x}{dt^2}$ The operator $\operatorname{Re}(\cdot)$ returns the real part of an imaginary number $(\cdot)\in\mathbb{C}$. With $\mathcal{C}^2$ we denote the space of functions with continuous second derivatives.

\subsection{Fair Dynamic Dispatching of WGs  While the WF Power Output is Tracking a Reference}
To formulate the main problem, we consider a set-up where we have a WF comprised with $n$ wind generators. We denote these generators by the set $\mathcal{G}\triangleq\{1,...,n\}$ and index each WG by $i$ where $i\in\mathcal{G}$. The available mechanical power that can be extracted from the wind by each WG is given by \cite{Pai}:
\begin{equation}
P_{m,i}\triangleq\frac{1}{2}\rho C_{p,i} A_{i} v_{w,i}^3\;, \hspace{10mm} \forall i\in\mathcal{G}
\label{mechanical power}
\end{equation}
where $\rho\in\mathbb{R}_{++}$ is the air density $[\frac{kg}{m^3}]$, $C_{p,i}\in\mathbb{R}_{+}$ is the power coefficient,  $A_i=\pi R_i^2\in\mathbb{R}_{++}$ is the area (swept by the blades) and $v_{w,i}\in\mathbb{R}_{++}$, the local wind speed in $[\frac{m}{s}]$. Notice that, the only controllable variable in \eqref{mechanical power} is $C_{p,i}$  which can be regulated by the rotor speed of the WG,  $\omega_{r,i}$. The standard functionality provided by DFIGs wind turbines is Maximum Power Point Tracking (MPPT). Achieving MPPT presumes that the WG is controlled such that  $C_{p,i}=\bar{C}_{p,i}$, where $\bar{C}_{p,i}$ is the maximum value of $C_{p,i}$. In that case, \eqref{mechanical power} can be recasted to:
\begin{equation}
\bar{P}_{m,i}\triangleq\frac{1}{2}\rho \bar{C}_{p,i} A_{i} v_{w,i}^3\;, \hspace{10mm} \forall i\in\mathcal{G}
\label{mechanical power max}
\end{equation}
The total power that a WF is required to extract from the wind at any given moment can be described by a reference $P_d$ that equals the WF committed power output to the grid. The latter is true under the mild assumption that the WF power losses are negligible. In our case, we consider a setting where the  WGs are operating in a deloading regime and can always meet the demanded power reference i.e $P_d\leq \sum_{i\in\mathcal{G}} \bar{P}_{m,i}$. For that setting, the problem we aim to address can be formulated as follows.
\begin{problem}
To develop a fully distributed control scheme for the RSC of the WGs that guarantees meeting the next two conditions.
\label{Problem1}
\end{problem}
\begin{cond}
$\lim\limits_{t\to\infty} \sum\limits_{i\in\mathcal{G}} P_{m,i}= P_d$
\label{cond1}
\end{cond}
\begin{cond}
$\lim\limits_{t\to\infty}\Big(\frac{P_{m,i}}{\bar{P}_{m,i}}\Big)=\lim\limits_{t\to\infty}\Big(\frac{P_{m,j}}{\bar{P}_{m,j}}\Big),\;\forall i,j\in \mathcal{G}$
\label{cond2}
 \end{cond}
The first condition ensures that the total WF power extracted from the wind is tracking the reference while the second condition that the WGs are dynamically dispatched in a fairly manner i.e the ratio of the mechanical power to the maximum mechanical power of each WG is the same. We suppose that all WGs have identical technical characteristics such that $A_i=A_j$, $\forall i,j \in\mathcal{G}$. The following remark holds. \vspace{2mm} 
\begin{remark}
$\Big(\frac{P_{m,i}}{\bar{P}_{m,i}}\Big)=\Big(\frac{C_{p,i}}{\bar{C}_{p,i}}\Big),\hspace{13mm}    \forall i \in \mathcal{G}$
\label{remark1}
\end{remark}
Remark~\ref{remark1} directly appears when dividing \eqref{mechanical power} over \eqref{mechanical power max}. 
With this, the Condition~\ref{cond2} becomes:
\begin{cond}
\begin{align*}
\lim\limits_{t\to\infty}\Big(\frac{C_{p,i}}{\bar{C}_{p,i}}\Big)=\lim\limits_{t\to\infty}\Big(\frac{C_{p,j}}{\bar{C}_{p,j}}\Big),\qquad\forall i,j \in \mathcal{G}
\end{align*}
\label{cond3}
\end{cond}
Observe that:
\begin{center}
Condition~\ref{cond2}$\Leftrightarrow$ Condition~\ref{cond3}
\end{center} 
In the sequel, we use this observation to introduce an approach that addresses \textit{Problem}~\ref{Problem1}.


\section{Mathematical Modeling}
We present the WF-related models for providing the ground of the forthcoming analysis.  Specifically, we present the wind-speed stochastic model, the rotor-voltage dynamical model including the RSC control input and the rotor-speed dynamical model.
\subsection{Wind Speed Model}
The effective wind speed $v_{w,i}\in \mathbb{R}_{+}$ can be modeled by integrating two basic components, the slowly-varying mean wind-speed, $v_{m,i}\in \mathbb{R}_{+}$, and the fast turbulence, $v_{s,i}\in \mathbb{R}_{+}$ \cite{larsen},\cite{thomsen}. Therefore, $v_{w,i}$ appears as:
\begin{equation}
v_{w,i}=v_{m,i}+v_{s,i},\qquad i\in\mathcal{G}
\end{equation}
 
We note that, the turbulent component can be modeled in the standard state-space form parameterized by the mean wind-speed $v_{m,i}$ with 
$p_{1,i}\triangleq p_{1,i}(v_{m,i}),\;p_{2,i}\triangleq p_{2,i}(v_{m,i}),\;k_i\triangleq k_i(v_{m,i})\in\mathbb{R}$ as:
\arraycolsep=1pt
\begin{equation}
\label{stochasticwind}
  \begin{pmatrix}
   \dot{v}_{s,i} \\
    \ddot{v}_{s,i}   \end{pmatrix} \triangleq
  \begin{pmatrix}
   0 & 1  \\
    -\frac{1}{p_{1,i}p_{2,i}}  &  -\frac{p_{1,i}+p_{2,i}}{p_{1,i}p_{2,i}}   \end{pmatrix}\begin{pmatrix}
   v_{s,i} \\
    \dot{v}_{s,i}   \end{pmatrix}+\begin{pmatrix}
   0 \\
   \frac{k_i}{p_{1,i}p_{2,i}}   \end{pmatrix}e
 \end{equation}
 With $e\in \mathcal{N}(0,1)$, we denote a white noise process \cite{larsen},\cite{thomsen}.
 \subsection{Wind Generator Model}
Since the problem we aim to address involves only the rotor of the WG, we only present the rotor-side dynamics of the WG. These, can be fully described by the electromagnetic state-variables (rotor-voltages) dynamics  and the electromechanical state-variable (rotor-speed) dynamics  \cite{Pai},\cite{dynamicmodeling} as:
\begin{subequations}
\thickmuskip=1mu
\begin{align}
\intertext{\textit{Rotor-voltage Dynamics}}
\dot{E}_{d,i}^'&\triangleq\frac{1}{T_{0,i}^'}\Big[-(E_{d,i}^'-(X_{s,i}-X_{s,i}^')I_{qs,i})\nonumber\\&+T_{0,i}^'(-\omega_s\frac{X_{m,i}}{X_{r,i}}V_{qr,i}+(\omega_s-\omega_{r,i})E_{q,i}^')\Big],\;\forall i\in\mathcal{G}\label{eddynamics}\\
\dot{E}_{q,i}^'&\triangleq\frac{1}{T_{0,i}^'}\Big[-(E_{q,i}^'+(X_{s,i}-X_{s,i}^')I_{ds,i})\nonumber\\&+T_{0,i}^'(\omega_s\frac{X_{m,i}}{X_{r,i}}V_{dr,i}-(\omega_s-\omega_{r,i})E_{d,i}^')\Big],\;\forall i\in\mathcal{G}\label{eqdynamics}
\intertext{\textit{Rotor-speed Dynamics}}
\dot{\omega}_{r,i} &\triangleq\frac{\omega_s}{2H_i}(T_{m,i}-T_{e,i})\;,\hspace{2mm}\forall i\in\mathcal{G}\label{rotorspeed}
\intertext{\textit{Mechanical Torque}}
T_{m,i}&\triangleq\frac{1}{2}\frac{\rho \pi  R_i^2 \omega_s}{S_{b,i} \omega_{r,i}}C_p(\lambda_i,\theta_i)v_{w,i}^3\;,\hspace{2mm}\forall i\in\mathcal{G}\label{mechanicaltorque}
\intertext{\textit{Power Coefficient}}
C_{p,i}(\lambda_i,\theta_i)&\triangleq 0.22\Big[116(\frac{1}{\lambda_i+0.08\theta_i}-\frac{0.035}{\theta_i^3+1})\Big]\nonumber \\
 &\cdot \mathrm{e}^{\left(-12.5(\frac{1}{\lambda_i+0.08\theta_i}-\frac{0.035}{\theta_i^3+1})\right)},\;\;\;\forall i\in\mathcal{G}\label{Cp}
\intertext{\textit{Tip-speed Ratio}}
\lambda_i &\triangleq\Big(\frac{2k_i}{p_i}\Big)\Big(\frac{\omega_rR_i}{v_{w,i}}\Big)\;,\;\;\forall i\in\mathcal{G}
\label{lambdai}
\end{align}
\end{subequations}
 All the variables are explained in the Appendix.

\section{Leader-Follower Consensus Protocol}
By establishing the equivalence between Condition \ref{cond2} and \ref{cond3}, we readily observe that
 having WGs achieving fair sharing is the same as ensuring that Condition \ref{cond3} is met. Consequently, we can pose this problem as a consensus agreement problem among all WGs on the utilization levels $\frac{C_{p,i}}{\bar{C}_{p,i}},\forall i\in\mathcal{G}$. For this reason, we introduce an appropriate \textit{Leader-follower Consensus Protocol}  which we prove that it converges to an equilibrium point that solves the exact consensus agreement problem. Let WG 1 be the group leader $l\triangleq 1$ and $\bar{\mathcal{G}}\triangleq\{2,...,n\}$ where $\bar{\mathcal{G}}\subset \mathcal{G}$  be the set of WGs without the leader. Then, we propose the following protocol.\\
\textbf{Protocol $\mathcal{P}_1$}
\begin{subequations}
\begin{align}
\intertext{\textit{Leader WG}}
&\frac{d\xi_h}{dt}\triangleq(P_d-P_{m,l}-\sum_{i\in\bar{\mathcal{G}}} P_{m,i})\hspace{11mm}\xi_h\in\mathbb{R}\label{consensus_equations1}\\
&\frac{dz_l}{dt}\triangleq-k_{\alpha,l}(z_l-\xi_h)\;,\;\;\;\; z_l\triangleq z_1 \hspace{9mm}z_l\in\mathbb{R}\label{consensus_equations2}\\
\intertext{\textit{WG} \;$i$}
&\frac{dz_i}{dt}\triangleq-k_{\alpha,i}(z_i-z_{i-1})\;,\;\;i\in\bar{\mathcal{G}} \hspace{9mm}z_i\in\mathbb{R}\label{consensus_equations3}
\end{align}
\end{subequations}
where the consensus states are $z_l\triangleq\Big(\frac{C_{p,l}}{\bar{C}_{p,l}}\Big)$ and $z_i\triangleq\Big(\frac{C_{p,i}}{\bar{C}_{p,i}}\Big)$ respectively and the auxiliary state-variable of the leader is  $\xi_h$. The allowable communication links for implementing this protocol can be seen in Fig.~\ref{physcomtop}.  
We briefly describe the mechanism by which the protocol is executed. The WF supervisory controller obtains the WF reference $P_d$ from the system operator and passes its value to the leader WG. Next, the leader WG computes its auxiliary state $\xi_h$ and its consensus state $z_l$, using the reference and information from all WGs. The leader communicates its consensus state with the time derivative of its consensus state to the neighboring WG. The same process is executed by all WGs i.e they obtain their consensus state and communicate it to a neighboring WG, concurrently.  An assumption that has to be valid in order for the protocol to be realizable is that the leader WG can retrieve the information $\sum_{i\in\bar{\mathcal{G}}}P_{m,i}$. The following methods can be used for this purpose.
\begin{enumerate}
\item information passing from each WG to the leader (indirectly, via intermediate WGs).
 \item  average consensus protocol \cite{murray} with consensus state-variable the mechanical power i.e $\sum_{i\in\bar{\mathcal{G}}} (P_{m,i}+P_{m,l})=(P_{m,avg}\cdot n)$.
\end{enumerate}

\section{Stability Analysis of the Proposed Protocol}
In this section, we study the asymptotic behavior of the proposed protocol and the properties of its equilibrium point. We begin by defining the following coefficients:
\begin{align}
\alpha_l&\triangleq\frac{1}{2}\rho \bar{C}_{p,l}  A_{l} v_{w,l}^3,\;\;\alpha_l \in\mathbb{R}_{+} \\
 \alpha_i&\triangleq\frac{1}{2}\rho \bar{C}_{p,i} A_{i} v_{w,i}^3 ,\;\;\alpha_i \in\mathbb{R}_{+},\;i\in\bar{\mathcal{G}}
\end{align}
 where in vector form are written as $\pmb{\alpha}=[\alpha_l,\alpha_2, ... ,\alpha_{n}]^\top, \pmb{\alpha} \in \mathbb{R}^n$.  We define the consensus-states vector compactly as $\matr{z}=[z_1,... ,z_i,..., z_n]^\top,\;\; \matr{z}\in\mathbb{R}^n$. With these, Eq.~\eqref{consensus_equations1} become:
\begin{subequations}
\begin{align}
\frac{d\xi_h}{dt}&=(P_d-\alpha_lz_l-\sum_{i\in\bar{\mathcal{G}}} \alpha_i z_i)\label{consensus_equations4}
\end{align}
\end{subequations}
The equilibrium of the consensus protocol as obtained from Eq.~\eqref{consensus_equations4},\eqref{consensus_equations2},\eqref{consensus_equations3} is:
\begin{align}
 \xi_{h0}&=\frac{P_d}{(\alpha_l+\sum_{i\in\bar{\mathcal{G}}}\alpha_i)}\\
z_{l0}&=\xi_{h0}\\
z_{i0}&=\xi_{h0},\;\; \forall i\in\bar{\mathcal{G}}
\end{align} 
Without loss of generality we take $k_{\alpha,i}\triangleq k_{\alpha,l},\;\forall i\in \bar{\mathcal{G}}$. Defining $\varepsilon\in\mathbb{R}_{+}$ as $\varepsilon=\frac{\bar{\alpha}}{k_{a,i}}$  we have the next Theorem.
\begin{theorem}
$\exists\varepsilon^*>0$ s.t  $\forall\varepsilon<\varepsilon^*$ the equilibrium point $(\xi_{h0},\matr{z}_0)$ is \textit{asymptotically stable}.
\end{theorem}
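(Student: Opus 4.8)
The plan is to exploit the two-time-scale structure that the scaling $\varepsilon=\bar{\alpha}/k_{\alpha,i}$ makes explicit. Since a small $\varepsilon$ corresponds to a large common gain $k_{\alpha,i}=\bar{\alpha}/\varepsilon$, the consensus states $\matr{z}$ relax on a fast time scale whereas the integrator state $\xi_h$ is slow. First I would recast Protocol $\mathcal{P}_1$ into the standard singular-perturbation form, taking $\xi_h$ as the slow variable and $\matr{z}$ as the fast variable: multiplying \eqref{consensus_equations2}--\eqref{consensus_equations3} through by $\varepsilon$ puts the fast block in the gain-normalised form $\varepsilon\dot{\matr{z}}=-\bar{\alpha}M(\matr{z}-\xi_h\matr{1}_{n\times1})$, where $M$ is the lower-bidiagonal matrix with unit diagonal and $-1$ on the first subdiagonal induced by the leader-follower chain, while the slow equation \eqref{consensus_equations4} is gain-independent. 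Setting $\varepsilon=0$ in the fast block, the leader relation forces $z_l=\xi_h$ and the chain forces $z_i=z_{i-1}$, so the unique isolated root is the quasi-steady-state manifold $\matr{z}=h(\xi_h)=\xi_h\matr{1}_{n\times1}$, which is linear (hence smooth) in $\xi_h$.

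Next I would analyse the two subsystems separately. Substituting the manifold into \eqref{consensus_equations4} collapses the slow dynamics to the scalar reduced system $\dot{\xi}_h=P_d-\alpha_T\xi_h$ with $\alpha_T\triangleq\alpha_l+\sum_{i\in\bar{\mathcal{G}}}\alpha_i>0$; this is exponentially stable about $\xi_{h0}$, certified by $V(\xi_h)=\tfrac{1}{2}(\xi_h-\xi_{h0})^2$ with $\dot{V}=-\alpha_T(\xi_h-\xi_{h0})^2$. For the boundary layer I would pass to the fast time $\tau=t/\varepsilon$, freeze $\xi_h$, and study the deviation $\matr{y}=\matr{z}-h(\xi_h)$, which obeys $d\matr{y}/d\tau=-\bar{\alpha}M\matr{y}$. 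Because $M$ is lower triangular with every diagonal entry equal to $1$, its spectrum is $\{1\}$, so $-\bar{\alpha}M$ is Hurwitz with all eigenvalues at $-\bar{\alpha}$, uniformly in $\xi_h$; a boundary-layer Lyapunov function $W(\matr{y})=\matr{y}^\top P\matr{y}$ then follows from the unique $P\succ0$ solving $(-\bar{\alpha}M)^\top P+P(-\bar{\alpha}M)=-\matr{I}_n$.

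To conclude, I would form the composite Lyapunov function $\nu=(1-d)V+dW$ for a weight $d\in(0,1)$ and evaluate $\dot{\nu}$ along the full interconnected system. The reduced and boundary-layer terms contribute negative quadratics of order $1$ and $1/\varepsilon$ in $|\xi_h-\xi_{h0}|$ and $\|\matr{y}\|$ respectively, while the slow-fast coupling (through the $\xi_h$-dependence of the manifold and the feedback of $\matr{z}$ into \eqref{consensus_equations4}) produces indefinite cross terms of order $1$. Collecting these into a $2\times2$ quadratic form in $(|\xi_h-\xi_{h0}|,\|\matr{y}\|)$ and demanding that its defining matrix be positive definite yields an explicit threshold $\varepsilon^*$: for every $\varepsilon<\varepsilon^*$ the form is negative definite, so $\nu$ certifies asymptotic stability of $(\xi_{h0},\matr{z}_0)$. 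This is precisely the content of the standard singular-perturbation stability theorem \cite{khalil}, whose three hypotheses (isolated smooth manifold, exponentially stable reduced system, uniformly exponentially stable boundary layer) have been verified above.

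The main obstacle is the final step: bounding the slow-fast cross terms and choosing the interconnection weight $d$ so that $\dot{\nu}$ is sign-definite, since this is exactly where the admissible bound $\varepsilon^*$ is produced. The one structural prerequisite that deserves care is the uniform Hurwitz property of the boundary-layer matrix; here it is the acyclic leader-follower chain that renders the fast Jacobian triangular and hence trivially Hurwitz, so this step would require a genuinely separate argument if the communication graph were allowed to contain cycles.
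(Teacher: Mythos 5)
Your proposal is correct and follows essentially the same route as the paper: both cast Protocol $\mathcal{P}_1$ in standard singularly perturbed form with $\xi_h$ slow and $\matr{z}$ fast, identify the quasi-steady-state manifold $\matr{z}=\xi_h\matr{1}_{n\times 1}$, establish stability of the scalar reduced system and of the boundary layer via the lower-triangular (hence Hurwitz) chain matrix, and conclude by the standard singular-perturbation theorem of \cite{khalil}. The only difference is that the paper stops at citing Theorem 11.4, whereas you additionally sketch that theorem's internal composite-Lyapunov-function mechanism that produces the threshold $\varepsilon^*$.
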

\begin{proof}
 First, we define a new time-scale $\tau=\bar{\alpha}\; t$ with $d\tau=\bar{\alpha}\;dt$. Using this, equations \eqref{consensus_equations2},\eqref{consensus_equations3}, \eqref{consensus_equations4} become:
\begin{subequations}
\begin{align}
\frac{d{\xi}_h}{d\tau}&\triangleq(\frac{P_d}{\bar{\alpha}}-\frac{\alpha_l}{\bar{\alpha}} z_l-\sum_{i\in \bar{\mathcal{G}}} \frac{\alpha_i}{\bar{\alpha}} z_i)
\label{consensus_equations5}\\
\Big(\frac{\bar{\alpha}}{k_{\alpha,l}}\Big)\frac{d{z}_l}{d\tau}&\triangleq-(z_l-\xi_h)
\label{consensus_equations6}\\
\Big(\frac{\bar{\alpha}}{k_{\alpha,i}}\Big)\frac{d{z}_i}{d\tau}&\triangleq-(z_i-z_{i-1}),\qquad i\in\mathcal{\bar{G}}
\label{consensus_equations7}
\end{align}
\end{subequations}
 Letting  $\frac{\bar{a}}{k_{\alpha,l}}=\frac{\bar{a}}{k_{\alpha,l}}=\varepsilon$, we can write the above equations more compactly as:\\
 \textit{Slow quasi-steady system}
\begin{subequations}
\begin{align}
\frac{d{\xi}_h}{d\tau}&\triangleq g_h,\hspace{5mm} g_h\in\mathbb{R}
\label{consensus_equations8}
\intertext{\textit{Fast boundary-layer system}}
\varepsilon \frac{\matr{dz}}{\matr{d}\bm{\tau}}&\triangleq\matr{g},\hspace{5mm} \matr{g}\in\mathbb{R}^n
\label{consensus_equations10}
\end{align}
where
\begin{align}
 g_h&\triangleq(\frac{P_d}{\bar{\alpha}}-\frac{\alpha_l}{\bar{\alpha}} z_l-\sum_{i\in \bar{\mathcal{G}}} \frac{\alpha_i}{\bar{\alpha}} z_i)\nonumber \\
 \matr{g}&\triangleq [-(z_l-\xi_h) \dots -(z_i-z_{i-1}) \dots -(z_n-z_{n-1})]^\top
\end{align}
\end{subequations}
 Equations \eqref{consensus_equations8},\eqref{consensus_equations10} are in the \textit{standard singularly perturbed form} \cite{khalil} with $\xi_h$  being the slow state-variable and $\matr{z}$ being the fast state-variables. A system possessing a multi-time-scale property enables a compartmental  stability analysis of its system dynamics. By exploiting this property, we first study the fast boundary-layer system dynamics \eqref{consensus_equations10} in a new time scale $\tilde{\tau}=\frac{\tau}{\varepsilon}$. Assuming that the slow state-variable $\xi_h$ is \textit{``frozen''} i.e $\frac{d\xi_h}{d\tilde{\tau}}\approx 0$, and using the transformation $y_i=(z_i-\xi_h)$ we can write the system equations \eqref{consensus_equations10} as:
\begin{align}
\frac{\matr{dy}}{\matr{d}\bm{\tilde{\tau}}}&\triangleq\matr{A}_f \matr{y},
\label{fastsystem}\\
\matr{y}&\triangleq[y_1,...,y_n]^\top\\
\matr{A}_f&\triangleq\underbrace{\begin{bmatrix}
-1 & 0 & \cdots  & 0 & 0 \\
1 & -1 & \cdots   & 0 & 0 \\
\vdots & \ddots     &  &  & \vdots  \\
0 & 0 & \cdots  & 1 & -1 \\
\end{bmatrix}}_{n\times n}
\end{align}
Since $\matr{A}_f$ is a lower-triangular matrix, the diagonal terms represent also the eigenvalues of $\matr{A}_f$. From that, we can conclude that  $\matr{A}_f$ is a \textit{Hurwitz matrix} and that the equilibrium $\matr{y}_0=\matr{0}_{n\times 1}$ is \textit{asymptotically stable}. Equivalently that, the equilibrium $\matr{z}_0=(\xi_{h0}\cdot \matr{1}_{n\times 1})$ is \textit{asymptotically stable} and attractive to the trajectories of the fast state-variables $\matr{z}$. Thus, all $z_i$ will converge toward the slow state-variable $\xi_h$. We are left to show that the $\xi_h$ converges toward $\xi_{h0}$. To do that, we focus on the behavior of the fast sub-system equation \eqref{consensus_equations10} when $\varepsilon\triangleq 0$, and observe that it degenerates into the algrebraic equation:
\begin{align}
\matr{0}_{n\times 1}=\matr{g}(\xi_h,\matr{z})
\label{slowmanifold}
\end{align}
Solving for $\matr{z}$ results into the  $n-$dimensional equilibrium slow-manifold of \eqref{slowmanifold}, described by $\matr{z}=\eta(\xi_h)=(\xi_h\cdot \matr{1}_{n\times 1})$. Direct substitution into \eqref{consensus_equations8} yields the slow model:
\begin{align}
\frac{d{\xi}_h}{d\tau}&=g_h(\xi_h,\eta(\xi_h))\nonumber\\&=\Big(\frac{P_d}{\bar{\alpha}}-\frac{\alpha_l}{\bar{\alpha}} \xi_h-\sum_{i\in \bar{\mathcal{G}}} \frac{\alpha_i}{\bar{\alpha}} \xi_h\Big)
\label{consensus_equations8_new}
\end{align}
The slow sub-system \eqref{consensus_equations8_new} has an \textit{asymptotically stable} equilibrium equal to:
 \begin{align*}
 \xi_{h0}=\frac{P_d}{(\alpha_l+\sum_{i\in\bar{\mathcal{G}}}\alpha_i)}
 \end{align*}
   Having established that, the fast and the slow sub-systems have asymptotically stable equilibria, from Theorem 11.4 (in \cite{khalil}) we conclude the next statement. That, $\exists\varepsilon^*>0$ such that $\forall\varepsilon<\varepsilon^*$ the equilibrium point of the full system \eqref{consensus_equations5}-\eqref{consensus_equations7} is \textit{asymptotically stable}. That, completes the proof.\end{proof}

\section{Design of the Consensus-based Torque Controller}
The previous Section was dedicated to establishing asymptotic convergence of the proposed consensus protocol to an equilibrium point that realizes the desired control objectives, described in \textit{Problem} \ref{Problem1}.  For implementing the proposed protocol, we  design a cooperative torque controller for the RSC of each WG that will force the system dynamics to evolve as in \eqref{consensus_equations5}-\eqref{consensus_equations7}.  Writing equation~\eqref{consensus_equations3} analytically leads to:
\begin{equation}
\frac{\dot{C}_{p,i}}{\bar{C}_{p,i}}=-k_{\alpha,i}(\frac{C_{p,i}}{\bar{C}_{p,i}}-\frac{C_{p,i-1}}{\bar{C}_{p,i-1}}),\hspace{20mm} i\in\mathcal{G}
\label{cpiconsensus}
\end{equation}
Noticing from \eqref{Cp} that when $\theta_i=0$  we have
\begin{align}
C_{p,i}\triangleq C_{p,i}(\lambda_i,0),\hspace{38mm} i\in\mathcal{G}
\end{align} The term $\frac{\dot{C}_{p,i}}{\bar{C}_{p,i}}$ in \eqref{cpiconsensus} can be expanded as:
\begin{equation}
\frac{\dot{C}_{p,i}}{\bar{C}_{p,i}}\triangleq\Big(\frac{1}{\bar{C}_{p,i}}\Big)\Big(\frac{\partial C_{p,i}}{\partial \lambda_i}\Big)\Big(\frac{\partial \lambda_i}{\partial \omega_{r,i}}\Big)\Big(\frac{\omega_s}{2H_i}\Big)(T_{m,i}-T_{e,i}),\hspace{3mm} i\in\mathcal{G}\label{cpidot}
\end{equation}
Letting  ~\eqref{cpidot} and \eqref{cpiconsensus} to be equal gives the electrical torque $T_{e,i}^{*}\in\mathbb{R}$  as:
\begin{equation}
T_{e,i}^{*}=T_{m,i}-(\frac{1}{\bar{C}_{p,i}}\frac{\partial C_{p,i}}{\partial \lambda_i}\frac{\partial \lambda_i}{\partial \omega_{r,i}}\frac{\omega_s}{2H_i})^{-1}\Big[-k_{\alpha,i}(\frac{C_{p,i}}{\bar{C}_{p,i}}-\frac{C_{p,i-1}}{\bar{C}_{p,i-1}})\Big]
\end{equation}
To guarantee that $\lim_{t\to\infty} (T_{e,i})=T_{e,i}^{*}$ we consider the candidate Control Lyapunov Function (CLF):
\begin{equation}
V_{e,i}=\frac{1}{2}(T_{e,i}-T_{ei}^{*})^2,\qquad  \forall i\in\mathcal{G}
\end{equation}
where $V_{e,i}>0,\;\forall T_{e,i}\in \mathcal{D}_{e,i}\setminus\{T_{ei}^{*}\} $.
Now, consider the following proposition.
\begin{proposition}
$V_{e,i}\in \mathcal{C}^2$ is a \textit{CLF} and the equilibrium $T_{e,i}=T_{e,i}^*$ can be rendered \textit{asymptotically stable}.
\end{proposition}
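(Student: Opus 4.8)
The plan is to verify the two defining requirements of a CLF for the scalar quantity $T_{e,i}$, which we regulate to $T_{e,i}^*$, taking the RSC rotor voltages $\matr{u}_i\triangleq[V_{dr,i},V_{qr,i}]^\top$ as the control input. The first requirement---that $V_{e,i}$ be a valid Lyapunov candidate---is immediate: $V_{e,i}=\tfrac{1}{2}(T_{e,i}-T_{e,i}^*)^2$ is a polynomial in $T_{e,i}$, hence $\mathcal{C}^2$ (indeed $\mathcal{C}^\infty$); it is positive definite with respect to the equilibrium $T_{e,i}=T_{e,i}^*$ and vanishes only there, as already noted after its definition. The substance of the proof therefore reduces to the second requirement: exhibiting, for every $T_{e,i}\in\mathcal{D}_{e,i}\setminus\{T_{e,i}^*\}$, an admissible $\matr{u}_i$ that makes $\dot{V}_{e,i}<0$.

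Next I would differentiate along trajectories, obtaining $\dot{V}_{e,i}=(T_{e,i}-T_{e,i}^*)(\dot{T}_{e,i}-\dot{T}_{e,i}^*)$. Since $T_{e,i}$ is a smooth function of the transient EMFs $E_{d,i}'$ and $E_{q,i}'$, whose dynamics \eqref{eddynamics}--\eqref{eqdynamics} contain the rotor voltages affinely through the gain $\omega_s X_{m,i}/X_{r,i}$, the chain rule yields a control-affine form
\begin{equation*}
\dot{T}_{e,i}=a_i(\matr{x})+\matr{b}_i(\matr{x})^\top\matr{u}_i ,
\end{equation*}
where $a_i$ collects the drift (folding in the reference-rate term $\dot{T}_{e,i}^*$, which depends only on states and wind) and $\matr{b}_i$ is the input-gain vector. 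The structural fact I would establish is that $\matr{b}_i(\matr{x})\neq\matr{0}$ on $\mathcal{D}_{e,i}$: the rotor voltage enters the EMF dynamics with the nonvanishing coefficient $\omega_s X_{m,i}/X_{r,i}$, and $T_{e,i}$ depends genuinely on the EMFs, so the partials $\partial T_{e,i}/\partial E_{d,i}'$ and $\partial T_{e,i}/\partial E_{q,i}'$ do not vanish simultaneously in the operating region.

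Finally I would invoke Artstein's characterization together with Sontag's universal formula \cite{sontag}. Because $\matr{b}_i(\matr{x})\neq\matr{0}$ throughout $\mathcal{D}_{e,i}\setminus\{T_{e,i}^*\}$, the set on which the control is powerless is empty, so for every such state one may select $\matr{u}_i$ enforcing $\dot{T}_{e,i}-\dot{T}_{e,i}^*=-\kappa\,(T_{e,i}-T_{e,i}^*)$ with $\kappa>0$; concretely this is the minimum-norm feedback obtained through the pseudo-inverse of $\matr{b}_i^\top$, the same inversion underlying the $(\cdot)^{-1}$ factor in the definition of $T_{e,i}^*$. Substituting gives $\dot{V}_{e,i}=-\kappa\,(T_{e,i}-T_{e,i}^*)^2<0$ away from equilibrium, which simultaneously certifies that $V_{e,i}$ is a CLF and, by Lyapunov's direct method applied to the resulting closed loop, that $T_{e,i}=T_{e,i}^*$ is rendered \emph{asymptotically stable}. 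I expect the only genuine obstacle to be the non-vanishing of $\matr{b}_i$---that the RSC voltage always retains authority over $\dot{T}_{e,i}$---which fails precisely at the singular tip-speed ratios where $\partial C_{p,i}/\partial\lambda_i=0$; accordingly $\mathcal{D}_{e,i}$ must be restricted to exclude those points so that the required invertibility holds.
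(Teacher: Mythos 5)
Your proof skeleton matches the paper's: differentiate $V_{e,i}$, exploit the control-affine structure inherited from the rotor-voltage dynamics, show the input gain is nonvanishing, and conclude from the CLF property that the equilibrium can be rendered asymptotically stable. However, the step carrying the entire burden of the proposition --- the non-vanishing of the input gain --- is left unverified and then mislocated. You assert that $\partial T_{e,i}/\partial E_{d,i}'$ and $\partial T_{e,i}/\partial E_{q,i}'$ ``do not vanish simultaneously in the operating region'' without computing them. The paper makes this trivial by committing to the explicit torque model $T_{e,i}=E_{q,i}'V_{s,i}/X_{s,i}'$ of Eq.~\eqref{Tei}: then $T_{e,i}$ depends on $E_{q,i}'$ alone, only the single input $V_{dr,i}$ enters $\dot{T}_{e,i}$ (through \eqref{eqdynamics}), and the gain multiplying $V_{dr,i}$ in $\dot{V}_{e,i}$ is $\tilde{T}_{e,i}\,\frac{V_{s,i}}{X_{s,i}'}\,\omega_s\frac{X_{m,i}}{X_{r,i}}$, which is nonzero for every $\tilde{T}_{e,i}\neq 0$ because it is a product of fixed machine constants and the stator voltage.

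The genuine error is in your closing paragraph: the input-gain vector $\matr{b}_i$ has nothing to do with $\partial C_{p,i}/\partial\lambda_i$. That derivative enters only through the definition of the reference $T_{e,i}^{*}$ --- the $(\cdot)^{-1}$ factor there inverts the mechanical sensitivity $\frac{1}{\bar{C}_{p,i}}\frac{\partial C_{p,i}}{\partial \lambda_i}\frac{\partial \lambda_i}{\partial \omega_{r,i}}\frac{\omega_s}{2H_i}$, i.e.\ how rotor acceleration moves $C_{p,i}$ --- and hence it sits in your drift term $a_i$ via $\dot{T}_{e,i}^{*}$; it is not ``the same inversion'' as the pseudo-inverse of $\matr{b}_i^\top$. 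Consequently the RSC voltage's authority over $\dot{T}_{e,i}$ never fails at singular tip-speed ratios; what fails there is the well-posedness of the reference trajectory $T_{e,i}^{*}$ itself. The domain restriction you propose is indeed needed, but for the reason that $T_{e,i}^{*}$ and $\dot{T}_{e,i}^{*}$ must exist and be finite, not to preserve $\matr{b}_i\neq\matr{0}_{2\times 1}$. With the gain identified correctly, as in the paper, the CLF inequality can be enforced at every $T_{e,i}\in\mathcal{D}_{e,i}\setminus\{T_{e,i}^{*}\}$ wherever the reference is defined, which is exactly the paper's conclusion.
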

\begin{proof}
Define the variable $\tilde{T}_{e,i}=(T_{e,i}-T_{e,i}^*),\;\;\forall i\in\mathcal{G}$ and let the electrical torque expressed as:
\begin{equation}
T_{e,i}=\frac{E_{q,i}^{'}V_{s,i}}{X_{s,i}^{'}},\;\;  \forall i\in\mathcal{G}
\label{Tei}
\end{equation}
 Computing the time-derivative of $V_{e,i}$ along $\dot{\tilde{T}}_{e,i}$ dynamics gives us:
\begin{align}
\frac{dV_{e,i}}{dt}&=\tilde{T}_{e,i} \frac{V_{s,i}}{X_{s,i}^'}\frac{1}{T_{0,i}^'}\Big[-(E_{q,i}^'+(X_{s,i}-X_{s,i}^')I_{ds,i})\nonumber\\&+T_{0,i}^'(\omega_s\frac{X_{m,i}}{X_{r,i}}V_{dr,i}-(\omega_s-\omega_{r,i})E_{d,i}^')\Big]-\tilde{T}_{e,i}\dot{T}_{e,i}^{*}
\end{align}
This expression can be compactly expressed as:
\begin{equation}
\frac{dV_{e,i}}{dt}=\frac{\partial V_{e,i}}{\partial\tilde{T}_{e,i}} [f_i+h_iV_{dr,i}],\qquad \forall i\in\mathcal{G}
\end{equation}
where 
\begin{align}
\frac{\partial V_{e,i}}{\partial\tilde{T}_{e,i}} f_i&\triangleq\tilde{T}_{e,i} \frac{V_{s,i}}{X_{s,i}^'}\frac{1}{T_{0,i}^'}\Big[-(E_{q,i}^'+(X_{s,i}-X_{s,i}^')I_{ds,i})\nonumber\\&+T_{0,i}^'(-(\omega_s-\omega_{r,i})E_{d,i}^')\Big]-\tilde{T}_{e,i}\dot{T}_{e,i}^{*},\qquad \forall i\in\mathcal{G}\\
\frac{\partial V_{e,i}}{\partial\tilde{T}_{e,i}}h_{i}&\triangleq\tilde{T}_{e,i} \frac{V_{s,i}}{X_{s,i}^'}\omega_s\frac{X_{m,i}}{X_{r,i}},\qquad \forall i\in\mathcal{G}
\end{align}
We observe that $\frac{\partial V_{e,i}}{\partial\tilde{T_{e,i}}}h_i\neq 0$ whenever $\tilde{T}_{e,i}\neq 0$. That, means the feedback control input $V_{dr,i}$ can always guarantee that $\frac{\partial V_{e,i}}{\partial\tilde{T_{e,i}}}f_i< 0,\;\;$  $\forall\tilde{T}_{e,i}\neq 0$. That, proves that $V_{e,i}$ is indeed a CLF and that the equilibrium $T_{e,i}=\tilde{T}_{e,i}^{*}$ can be rendered exponentially stable. That, completes the proof.\end{proof}
We note that,   $V_{e,i}$ being a   CLF is a necessary condition for the existence of a stabilizing feedback control $V_{dr,i}$. Hence, we proceed by designing a RSC stabilizing controller $V_{dr,i}$. For having $\dot{V}_{e,i}<0$ we take $\dot{V}_{e,i}$ to be:
\begin{equation}
\frac{dV_{e,i}}{dt}=-k_{\beta,i}(T_{e,i}-T_{e,i}^{*})^2 <0, \;\;\;\;\;\forall T_{e,i}\in \mathcal{D}_{e,i}\setminus\{T_{e,i}^{*}\}
\label{Vedot}
\end{equation}
Equation \eqref{Vedot} can be written as $\frac{dV_{e,i}}{dt}=-2k_{\beta,i}(V_{e,i})$ which has solution $V_{e,i}=V_{e,i 0}\mathrm{e}^{-2k_{\beta,i}t}$ i.e $\lim_{t\to\infty}V_{e,i}=0,\;$ 
For having \eqref{Vedot} the following equation has to hold: \begin{equation}
\frac{dT_{e,i}}{dt}=\frac{dT_{e,i}^{*}}{dt}-k_{\beta,i}(T_{e,i}-T_{e,i}^{*}),\;\;  \forall i\in\mathcal{G}
\label{Teidot}
\end{equation}
Furthermore, we assume that  $\frac{dV_{s,i}}{dt}=0,\;\forall i\in\mathcal{G}$ and $\frac{\partial}{\partial t}(\frac{\partial \lambda_i}{\partial \omega_{r,i}})=0,\;\forall i\in\mathcal{G}$ hold.
\begin{figure}
\centering
\includegraphics[scale=0.49]{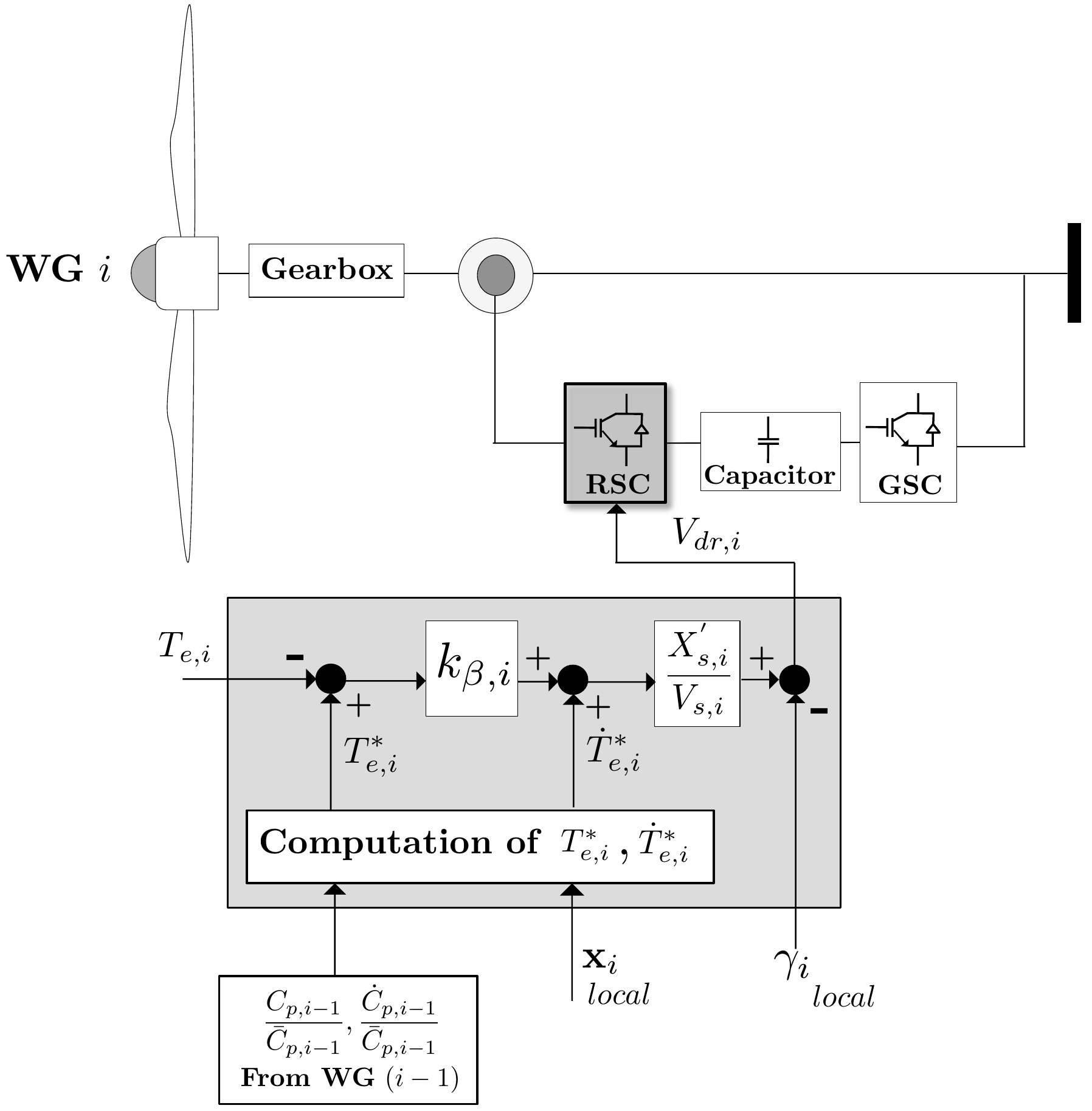}
\caption{Cooperative torque controller of WG $i$}
\label{CLFcontr}
\end{figure} 
 Combining equations~ \eqref{Teidot},\eqref{Tei},\eqref{eqdynamics} we derive the RSC controller as:
\begin{align}
\label{controlinput1}
V_{dr,i}&=\frac{X_{s,i}^{'}}{V_{s,i}}\Big[\frac{dT_{e,i}^{*}}{dt}-k_{\beta,i}(T_{e,i}-T_{ei}^{*})\Big]-\frac{1}{T_{0,i}^'}\Big[-(E_{q,i}^'\nonumber\\&+(X_{s,i}-X_{s,i}^')I_{ds,i})\nonumber\\&+T_{0,i}^'(-(\omega_s-\omega_{r,i})E_{d,i}^')\Big]\frac{X_{r,i}}{X_{m,i}\omega_s}, \hspace{5mm} i\in\mathcal{G}
\end{align}
This controller is depicted in Fig.~\ref{CLFcontr} with all the variables explained in the Appendix.
The expressions of the appearing terms $\frac{\partial \lambda_i}{\partial \omega_{r,i}},\frac{\partial C_{p,i}}{\partial \lambda_i},\frac{\partial^2 C_{p,i}}{\partial \lambda_i^2}$ are ommitted due to space limitation. Nevertheless, they can be derived from \eqref{Cp},\eqref{lambdai}.


\begin{figure}
\begin{center}
\includegraphics[scale=0.38]{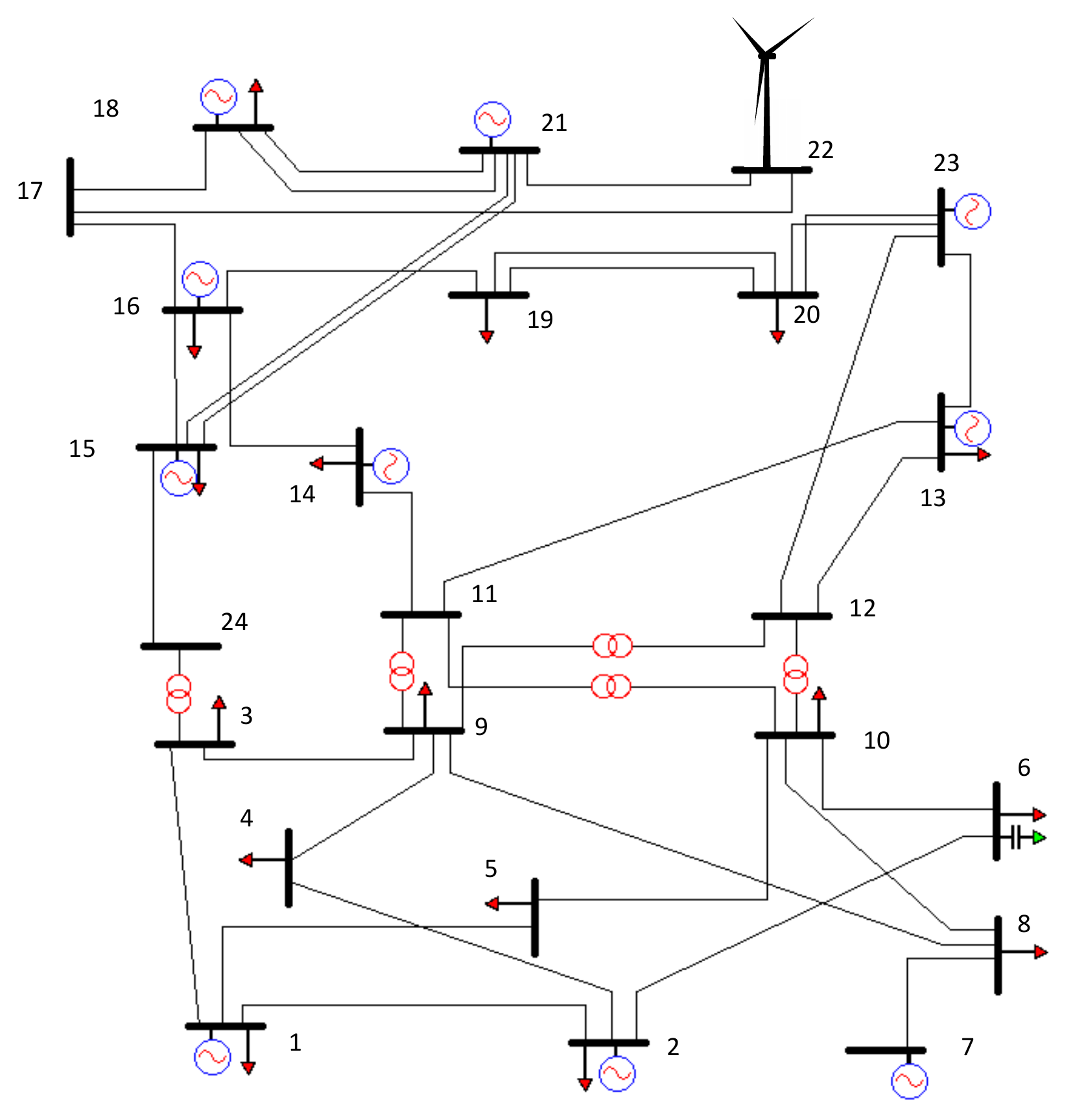}
\caption{IEEE 24-bus RT system with a WF (with 10 WGs) at bus 22}
\label{24buswind}
\end{center}
\end{figure}

\begin{figure}
       \begin{subfigure}{0.34\textwidth}
   \includegraphics[scale=0.34]{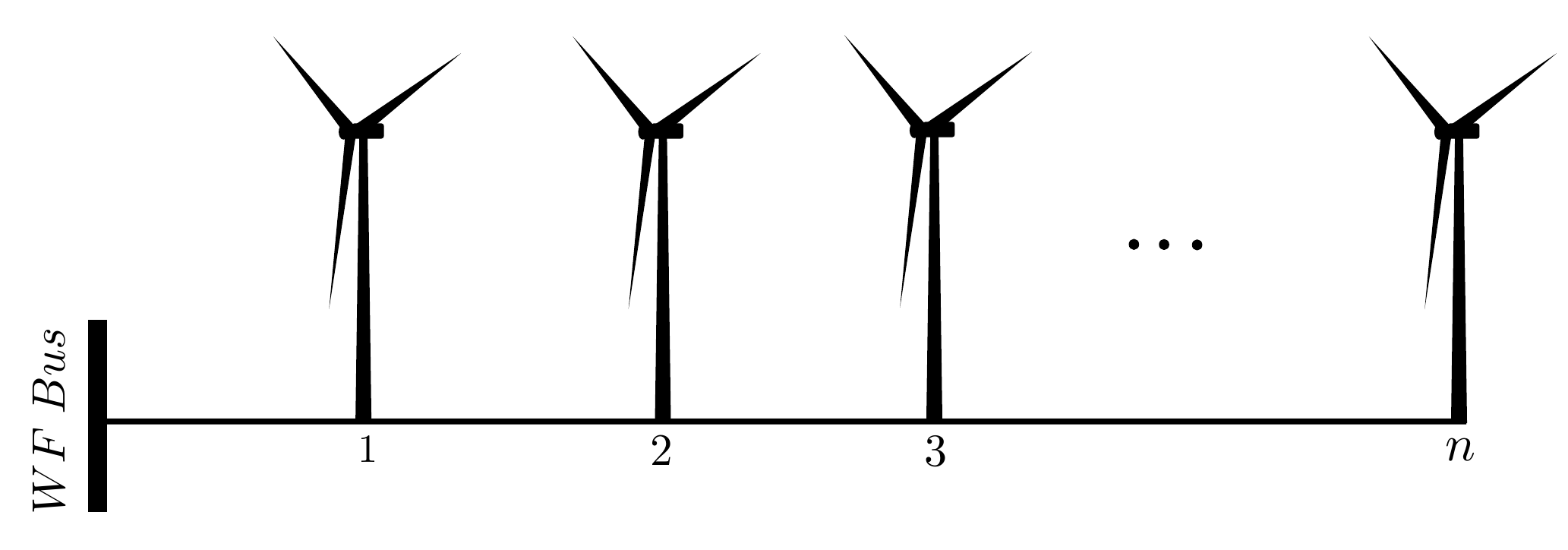}
        \caption{}
        \label{physicalcon}
    \end{subfigure}\\
    \begin{subfigure}{0.32\textwidth}    
   \includegraphics[scale=0.32]{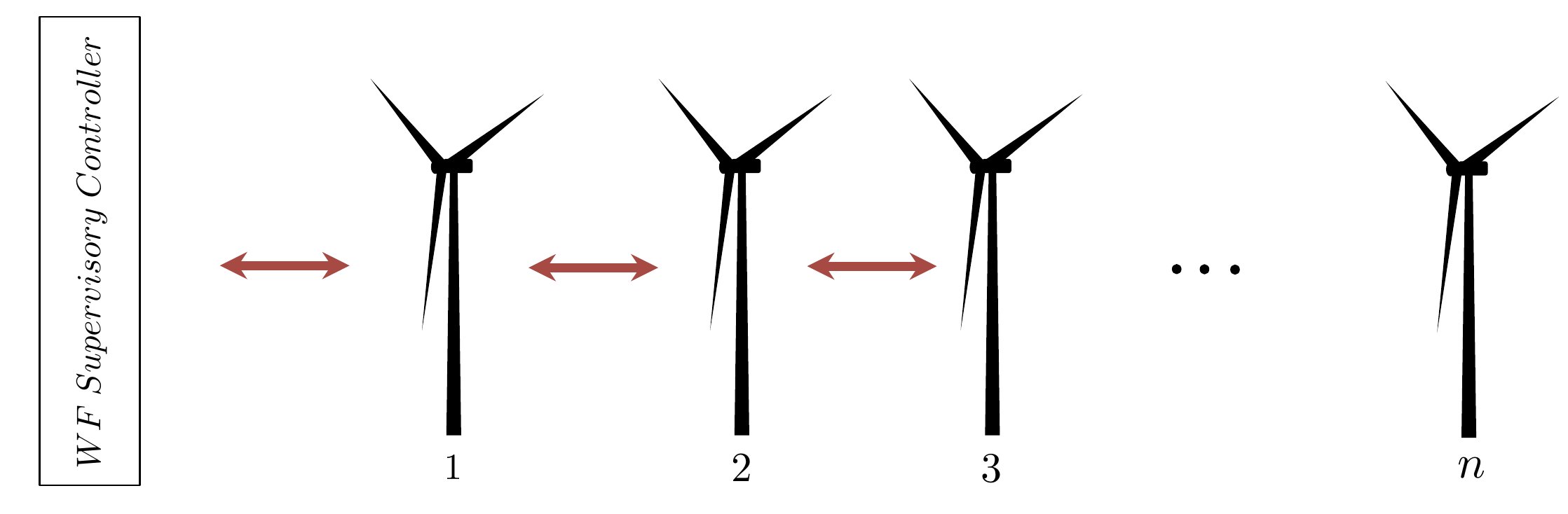}
        \caption{} 
        \label{commcon}
    \end{subfigure}
         \caption{a) Physical connectivity of the WF b) Communication structure between WGs }
        \label{physcomtop}
\end{figure}

\section{Performance Evaluation}
The effectiveness of the proposed approach is explored via simulations on the modified IEEE 24-bus RT system. In this system, a WF comprised with 10 WGs lies at bus 22.  The physical connectivity and the allowable communication links among the WGs (for $n=10$) are shown in Fig.~\ref{physcomtop}. The control logic for each RSC follows equation \eqref{controlinput1} (Fig.~\ref{CLFcontr}) whereas the group objective for the RSC controllers is to dynamically self-dispatch their WGs in a fair-sharing way and the WF power output to track a reference shown in Fig.~\ref{pwftrack}.  We studied the following critical scenarios.
\begin{align}
\textit{Scenario 1}:& \text{ At } t=0s, \text{ the WF power output reference is} \nonumber
\\[-0.05in]& \text{$0.38 p.u$  and suddenly at } t=0.2s, \text{the reference }\nonumber\\[-0.05in]& \text{changes step-wise to $0.42 p.u$ as seen in Fig.~\ref{pwftrack}}.\nonumber
\end{align}
The  response of the WF power output (blue) is tracking the reference (red) closely with good performance given standard metrics e.g overshoot, response time (Fig.~ \ref{pwftrack}).  The response of the consensus state-variables is depicted in Fig.~\ref{cptrack}.  Notice that, the response for all 10 WGs is completely identical. That, verifies the ``fair-dispatching'' between the WGs i.e each WG extracts mechanical power from the wind according to the local wind-speed conditions. In our setting, we regard that all WGs experience the same local wind-speed conditions. The mechanism by which the CLF-based RSC controller carries out its objectives is understood as follows. When the leader WG obtains the new power reference its consensus state-variable is ordered to increase value. Since all the WGs are trying to reach consensus with the leader they increase their consensus state variables, leading all the utilization factors to exceed $0.8$ while starting from a value of $0.73$ (Fig.~\ref{cptrack}).  To achieve that, the various RSC torque controllers slowed-down the WGs (Fig.~\ref{rotortrack}), enabling them to increase the mechanical power that they extract from the wind until their total power reached the pre-assigned reference value (Fig.~\ref{pwftrack}).  In summary,  the proposed protocol and the developed RSC controllers effectively  address  Problem~\ref{Problem1}.
\begin{figure}
\centering
  \begin{subfigure}{0.35\textwidth}
\includegraphics[scale=0.35]{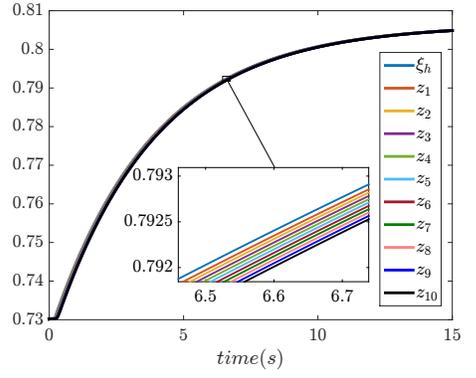}
\caption{Response of the $C_{p,i}$ coefficients }
\label{cptrack}
\end{subfigure}
\\
  \begin{subfigure}{0.35\textwidth}
\includegraphics[scale=0.35]{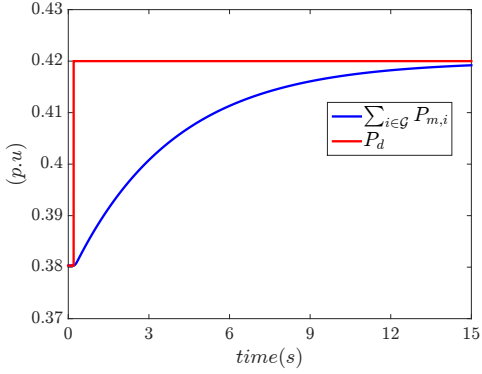}
\caption{WF total mechanical power tracking response (reference $P_d$)}
\label{pwftrack}
\end{subfigure}\\
  \begin{subfigure}{0.35\textwidth}
\includegraphics[scale=0.35]{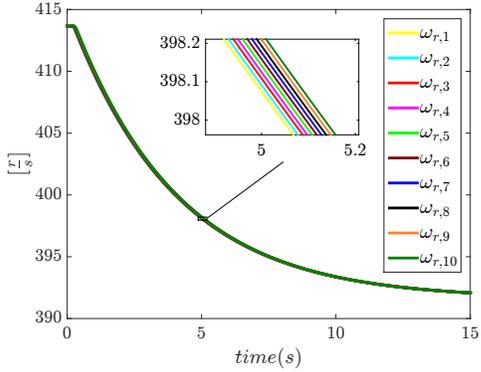}
\caption{Rotor speed response}
\label{rotortrack}
\end{subfigure}
\caption{System response under Scenario 1}
\end{figure}
\section{Concluding Remarks}
This paper introduced a leader-follower consensus protocol that is able to dynamically dispatch a fleet of WGs according to their local wind-speed conditions such that the WF total power output reaches a new assigned value. By employing singular perturbation theory \cite{khalil}, we provided theoretical guarantees in the form of asymptotic stability of desired equilibria, proving that the protocol will asymptotically accomplish its aforementioned objectives.  On the practical side, we developed a cooperative CLF-based RSC controller that implements the above protocol.  We demonstrated the performance of the proposed methodology via simulations on the IEEE 24-bus RT system.
\section{Appendix}
The various variables related to WG $i$ are explained below.\\\\
 \begin{tabular}{| l | l | }
\hline
\textbf{Variable} & \textbf{Corresponds to} \\\hline
 $T_{e,i}\in \mathbb{R}_{+}$ & electrical torque \\\hline
 $T_{m,i}\in \mathbb{R}_{+}$ &  mechanical torque \\\hline
 $T_{0,i}^'\in \mathbb{R}_{+}$ & transient open-circuit time constant\\\hline
 $X_{s,i}^'\in \mathbb{R}_{+}$ &  stator transient reactance\\\hline
 $X_{s,i}\in \mathbb{R}_{+}$ &  stator reactance \\\hline
 $X_{r,i}\in \mathbb{R}_{+}$ & rotor reactance\\\hline
 $X_{m,i}\in \mathbb{R}_{+}$ & mutual reactance of the stator-rotor\\\hline
 $H_i\in \mathbb{R}_{+}$ & combined inertia of the WG\\\hline
 $E_{q,i}^',E_{d,i}^'\in \mathbb{R}_{+}$ &  $q,d$ axis rotor voltages\\ 
 \hline
 $I_{qs,i},I_{ds,i}\in \mathbb{R}_{+}$ & $q,d$ axis stator current \\ \hline
   $V_{qr,i},V_{dr,i}\in \mathbb{R}_{+}$ &  $q$ and $d$ axis RSC control inputs \\\hline
  $\omega_s$ & synchronous speed $2\pi\cdot 60$ $[\frac{rad}{s}]$\\\hline
  $S_{b,i}\in \mathbb{R}_{+}$ & base power \\\hline
  $\omega_{r,i}\in \mathbb{R}_{+}$&  electrical rotor speed of the WG \\\hline
  $\lambda_i\in \mathbb{R}_{+}$ & tip speed ratio\\\hline
  $\theta_i\in \mathbb{R}$ &  pitch angle\\\hline
 $k_i\in \mathbb{R}_{+}$ &  gearbox ratio\\\hline
 $p_i\in \mathbb{R}_{+}$ &  poles\\\hline
 \end{tabular}
 
\begin{align}
\matr{x}_i&=[T_{m,i}\;,\dot{T}_{m,i}\;,\omega_{r,i}\;,\dot{\omega}_{r,i}\;,\frac{\partial C_{p,i}}{\partial\lambda_i}\; , \frac{\partial^2C_{p,i}}{\partial\lambda_i^2}\;,\frac{C_{p,i}}{\bar{C}_{p,i}}\;, \frac{\dot{C}_{p,i}}{\bar{C}_{p,i}},\nonumber\\
&\;\;\frac{\partial \lambda_i}{\partial \omega_{r,i}},\;v_{w,i}]^\top\\
\gamma_i&=\frac{1}{T_{0,i}^'}\Big[-(E_{q,i}^'+(X_{s,i}-X_{s,i}^')I_{ds,i})\nonumber\\&+T_{0,i}^'(-(\omega_s-\omega_{r,i})E_{d,i}^')\Big]\frac{X_{r,i}}{X_{m,i}\omega_s}\\
\dot{T}_{e,i}^{*}&=\dot{T}_{m,i}+k_{\beta,i}\bar{C}_{p,i}(\frac{\omega_s}{2H_i}\frac{\partial \lambda_i}{\partial \omega_{r,i}})^{-1}(\frac{\partial C_{p,i}}{\partial \lambda_i})^{-2}\nonumber\\
&\cdot\Big[(\frac{\dot{C}_{p,i}}{\bar{C}_{p,i}}-\frac{\dot{C}_{p,i-1}}{\bar{C}_{p,i-1}})\frac{\partial C_{p,i}}{\partial \lambda_i}
\nonumber\\
&-(\frac{C_{p,i}}{\bar{C}_{p,i}}-\frac{C_{p,i-1}}{\bar{C}_{p,i-1}})\frac{\partial^2 C_{p,i}}{\partial \lambda_i^2}\cdot\dot{\omega}_{r,i}\Big],\;\;\forall i\in\mathcal{G}
\end{align}

\nocite{*}
\IEEEpeerreviewmaketitle
\bibliographystyle{unsrt}
\bibliography{ACC2016_Distributed_torque_contr}{}

\end{document}